\newcommand{\coloneq}{\mathrel{:=}}
\newtheorem{algorithm}{Algorithm}
  \providecommand\BibTeX{{%
    \normalfont B\kern-0.5em{\scshape i\kern-0.25em b}\kern-0.8em\TeX}}}
\newcommand{\R}{\mathbb{R}}
\newcommand{\N}{\mathbb{N}}
\newcommand{\rem}[2]{{\overline{#1}}^{#2}}
\newcommand{\X}{\boldsymbol{X}}
\newcommand{\Y}{\boldsymbol{Y}}
\newcommand{\Rseries}{\mathord{\R\llbracket\X\rrbracket}}
\DeclareMathOperator{\NonVan}{NV}
\DeclareMathOperator{\RF}{RF}
\newcommand{\Tower}{\mathord{\mathrm{Tower}}}
\newcommand{\smoothto}{\xrightarrow{C^\infty}}
\definecolor{ltblue}{rgb}{0,0.4,0.4}
\definecolor{dkblue}{rgb}{0,0.1,0.6}
\definecolor{dkgreen}{rgb}{0,0.35,0}
\definecolor{dkviolet}{rgb}{0.3,0,0.5}
\definecolor{dkred}{rgb}{0.5,0,0}
\definecolor{comment}{HTML}{444444}
\definecolor{keywd}{HTML}{8D00ED}
\definecolor{types}{HTML}{1F7B2F}
\definecolor{str}{HTML}{4070a0}
\definecolor{code-background}{gray}{0.8}
\definecolor{pragma}{HTML}{372A78}
\definecolor{num}{HTML}{40a070}
\definecolor{symb}{HTML}{000000}
\newcommand\symbmath[1]{{\ensuremath{\color{symb}#1}}}
\def\ul#1{{\underline{\bfseries #1}}}
\newcommand{\divs}{\mathrel{\mid}}
\newcommand{\CartSp}{\mathord{\mathrm{CartSp}}}
\newcommand{\Sets}{\mathord{\mathrm{Sets}}}
\lstdefinelanguage{algorithm}{
  morecomment=[l][\slshape\color{comment}]{//},
  morecomment=[n][\bfseries\itshape]{/*}{*/},
  morestring=[b]\",
  stringstyle={\slshape},
  keywords={Pick,For,Done,End,in,with,Let,While,Loop,Do,Then,Else,Until,Unless,Return,INPUT,OUTPUT,If},
  numbers=left,
  escapechar=\@,
  keywordstyle={\ul},
  morekeywords=[2]{*,+,/,-,.,\%,!,?,=,<,>,$},
  keywordstyle=[2]{\ttfamily\bfseries},
  morekeywords=[3]{Convol,WeilTest,LiftWeil,LiftSeriesPoly,TowerAD},
  keywordstyle=[3]{\rmfamily\upshape\scshape},
  emphstyle={[2]\itshape},
  literate=
    {/=}{{\symbmath{\neq}}}2
    {forall}{{\ensuremath{\pmb{\forall}}}}1
    {*}{{\symbmath{\times}}}1
    {I}{{\symbmath{I}}}1
    {+}{{\symbmath{+}}}1
    {>}{{\symbmath{>}}}2
    {...}{{\symbmath{\ldots}}}2
    {<}{{\symbmath{<}}}2
    {alpha}{{\symbmath{\alpha}}}1
    {beta}{{\symbmath{\beta}}}1
    {->}{{\symbmath{\rightarrow}}}2 {>=}{{\symbmath{\geq}}}2 {<-}{{\symbmath{\leftarrow}}}2
    {g_i}{{\symbmath{g_i}}}2
    {f_i}{{\symbmath{f_i}}}2
    {===}{{\symbmath{\equiv}}}2
    {==>}{{\symbmath{\Longrightarrow}}}{3}
    {\\}{{\symbmath{\lambda}}}1
    {<=}{{\symbmath{\leq}}}1 {=>}{{\symbmath{\Rightarrow}}}2
}
\DeclareTotalTCBox{\hask}{v}{%
tcbox raise base,box align=base,verbatim,colback=lightgray,colframe=gray%
}{\mintinline[fontsize=\tiny]{haskell}{#1}}
\DeclareTotalTCBox{\hask}{v}{box align=base,verbatim,colback=lightgray,colframe=gray}{\lstinline[language=haskell]{#1}}
\begin{document}
\title{Automatic Differentiation With Higher Infinitesimals, or Computational Smooth Infinitesimal Analysis in Weil Algebra\thanks{This work was supported by the Research Institute for Mathematical Sciences, an International Joint Usage/Research Center located in Kyoto University.
}}
\titlerunning{Automatic Differentiation With Higher Infinitesimals}
%
\author{Hiromi Ishii\inst{1}}
\authorrunning{H. Ishii}
%
\institute{DeepFlow, Inc., 3-16-40 Fujimi-shi Tsuruse nishi 354-0026, Japan}
\maketitle              
\begin{abstract}
  \frenchspacing
  We propose an algorithm to compute the $C^\infty$-ring structure of arbitrary Weil algebra.
  It allows us to do some analysis with \emph{higher infinitesimals} numerically and symbolically.
  To that end, we first give a brief description of the (Forward-mode) \emph{automatic differentiation} (AD) in terms of \emph{$C^\infty$-rings}.
  The notion of a $C^\infty$-ring was introduced by Lawvere~\cite{lawvere1979categorical} and used as the fundamental building block of \emph{smooth infinitesimal analysis} and \emph{synthetic differential geometry}~\cite{Moerdijk:1991aa}.
  We argue that interpreting AD in terms of $C^\infty$-rings gives us a unifying theoretical framework and modular ways to express multivariate partial derivatives.
  In particular, we can ``package'' higher-order Forward-mode AD as a Weil algebra, and take tensor products to compose them to achieve multivariate higher-order AD.
  The algorithms in the present paper can also be used for a pedagogical purpose in learning and studying smooth infinitesimal analysis as well.
\end{abstract}

\keywords{automatic differentiation \and %
  smooth infinitesimal analysis \and %
  Weil algebras\and %
  smooth algebras\and  $C^\infty$-rings\and  %
  symbolic-numeric algorithms\and 
  symbolic differentiation\and  %
  Gr\"{o}bner basis\and zero-dimensional ideals}

\section{Introduction}\label{sec:intro}
\sloppy
\emph{Automatic Differentiation} (or, \emph{AD} for short) is a method to calculate derivatives of (piecewise) smooth functions accurately and efficiently.
AD has a long history of research, and under the recent rise of differentiable programming in machine learning, AD has been attracting more interests than before recently.

\emph{Smooth Infinitesimal Analysis} (or, \emph{SIA} for short), on the other hand, is an area of mathematics that uses \emph{nilpotent infinitesimals} to develop the theory of real analysis.
Its central building blocks are \emph{Weil algebras}, which can be viewed as the real line augmented with nilpotent infinitesimals.
Indeed, SIA is a subarea of \emph{Synthetic Differential Geometry} (SDG) initiated by Lawvere~\cite{lawvere1979categorical}, which studies smooth manifolds topos-theoretically, and higher multivariate infinitesimals play crucial roles in building theory of, e.g.\ vector fields, differential forms and tangent spaces.
The key observation of Lawvere is that manifolds can be classified solely by their smooth function ring $C^\infty(M)$, and both such function rings and Weil algebras are special cases of \emph{$C^\infty$-rings}.

It has been pointed out that AD and SIA have some connection; e.g.\ even Wikipedia article~\cite{Wikipedia:2021aa} mentions the connection between first-order Forward-mode AD with the ring $\R[X]/X^2$ of dual numbers.
However, a precise theoretical description of this correspondence is not well-communicated, and further generalisation of AD in terms of SIA hasn't been discussed in depth.

The present paper aims at filling this gap, giving a unified description of AD in terms of $C^\infty$-rings and Weil algebras.
Furthermore, our main contribution is algorithms to compute the $C^\infty$-ring structure of a general Weil algebra.
This enables automatic differentiation done in \emph{arbitrary} Weil algebras other than dual numbers, and, together with tensor products, lets us compute higher-order multivariate partial derivatives in a modular and composable manner, packed as Weil algebra.
Such algorithms can also be used to learn and study the theory of SIA and SDG.

This paper is organised as follows.
In \Cref{sec:prel}, we review the basic concepts and facts on $C^\infty$-rings and Weil algebras.
This section provides basic theoretical background --- but the proofs of proposed algorithms are, however, not directly dependent on the content of this section.
So readers can skip this section first and go back afterwards when necessary.
Subsequently, we discuss the connection between Forward-mode automatic differentiation and Weil algebras in \Cref{sec:ad-and-weils}.
There, we see how the notion of Weil algebra and $C^\infty$-ring can be applied to treat higher-order partial ADs in a unified and general setting.
Then, in \Cref{sec:alg}, we give algorithms to compute the $C^\infty$-ring structure of an arbitrary Weil algebra. These algorithms enable us to do \emph{automatic differentiation with higher infinitesimals}, or \emph{computational smooth infinitesimal analysis}.
We give some small examples in \Cref{sec:examples}, using our proof-of-concept implementation~\cite{Ishii:2020aa} in Haskell.
Finally, we discuss related and possible future works and conclude in \Cref{sec:concl}.

\section{Preliminaries}\label{sec:prel}
In this section, we briefly review classical definitions and facts on Weil algebras and $C^\infty$-rings without proofs, which will be used in \Cref{sec:alg}.
For theoretical detail, we refer readers to Moerdijk--Reyes~\cite[Chapters I and II]{Moerdijk:1991aa} or Joyce~\cite{joyce2016algebraic}.

We use the following notational convention:
\begin{definition}[Notation]
  Throughout the paper, we use the following notation:
  \begin{itemize}
    \item
    $g \circ f$ denotes the composite function from $A$ to $C$ of functions $f: A \to B$ and $g: B \to C$, that is, the function defined by $(g \circ f)(x) = g(f(x))$ for all $x \in A$.
    \item For functions $f_i: Z \to X_i\,(1 \leq i \leq n)$,
    $\braket{f_1, \dots, f_n}$ denotes the product of functions $f_i$ given by the universality of the product objects.
    That is, $\braket{f_1, \dots, f_n}$ is the function of type $Z \to X_1 \times \cdots \times X_n$ defined by $\braket{f_1, \dots, f_n}(z) = (f_1(z), \dots, f_n(z)) \in X_1 \times \cdots \times X_n$ for all $z \in Z$
  \end{itemize}
\end{definition}

\begin{definition}[Lawvere~\cite{lawvere1979categorical}]
  A \emph{$C^\infty$}-ring $A$ is a product-preserving functor from the category $\CartSp$ of finite-dimensional Euclidean spaces and smooth maps to the category $\Sets$ of sets.

  We identify $A$ with $A(\R)$ and $A^n$ with $A(\R^n)$.
  For a map $f: \R^m \to \R$, we call $A(f): A^m \to A$ the \emph{$C^\infty$-lifting} of $f$ to $A$.
\end{definition}

Intuitively, a $C^\infty$-ring $A$ is an $\R$-algebra $A$ augmented with $m$-ary operations $A(f): A^m \to A$ respecting composition, projection and product for all smooth maps $f: \R^m \to \R$.

One typical example of a $C^\infty$-ring is a formal power series ring:

\begin{theorem}[{Implicitly in Lawvere~\cite{lawvere1979categorical}; See~\cite[1.3  Borel's Theorem]{Moerdijk:1991aa}}]\label{thm:series-is-smooth}
  The ring $\R\llbracket X_1, \dots, X_n\rrbracket$ of formal power series with finitely many variables has the $C^\infty$-ring structure via Taylor expansion at $\boldsymbol{0}$.
  In particular, lifting of a smooth map $f: \R^m \to \R$ is given by:
  \[
    \Rseries(f)(g_1, \dots, g_m) = \sum_{\alpha \in \N^n} \frac{\X^\alpha}{\alpha!} D^\alpha(f \circ \braket{g_1, \dots, g_m})(\boldsymbol{0}),
  \]
  where $\alpha! = \alpha_1 ! \dots \alpha_n !$ is the multi-index factorial and $D^\alpha$ is the partial differential operator to degree $\alpha$.
\end{theorem}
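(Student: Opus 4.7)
The plan is to verify, in turn, that the displayed formula gives a well-defined assignment $\Rseries(f)\colon \Rseries^m \to \Rseries$, that this assignment is functorial in $f$, and that it preserves finite products. The first subtlety is that formal power series are not smooth functions, so the symbol $D^\alpha(f \circ \braket{g_1, \dots, g_m})(\boldsymbol{0})$ does not at first sight make sense with $g_i \in \Rseries$ as inputs.

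To make sense of the formula I would invoke Borel's theorem, which produces, for each $g_i \in \Rseries$, a smooth representative $\tilde g_i \colon \R^n \to \R$ whose Taylor expansion at the origin is $g_i$. The composite $f \circ \braket{\tilde g_1, \dots, \tilde g_m} \colon \R^n \to \R$ is then genuinely smooth, and the right-hand side is literally its formal Taylor series at $\boldsymbol{0}$. The central technical step is to show that this Taylor series depends only on the formal series $g_i$ and not on the chosen smooth lifts. This follows from the multivariate Faà di Bruno formula: each coefficient $D^\alpha(f \circ \braket{\tilde g_1, \dots, \tilde g_m})(\boldsymbol{0})/\alpha!$ is a universal polynomial in the partial derivatives of $f$ at $(\tilde g_1(\boldsymbol{0}), \dots, \tilde g_m(\boldsymbol{0}))$ and the partial derivatives of the $\tilde g_i$ at $\boldsymbol{0}$, all of which are determined by the $g_i$. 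I expect this Faà di Bruno bookkeeping to be the main technical obstacle, though once set up it is essentially combinatorial.

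One can repackage the same argument structurally: the Taylor map $\tau \colon C^\infty(\R^n) \to \Rseries$ is a surjective $\R$-algebra homomorphism with kernel $\mathfrak{m}_{\boldsymbol{0}}^\infty$, the ideal of functions flat at the origin. Since $C^\infty(\R^n)$ is itself a $C^\infty$-ring via pointwise evaluation, the content of the well-definedness step above is exactly the statement that $\mathfrak{m}_{\boldsymbol{0}}^\infty$ is a $C^\infty$-ideal, so that the quotient $C^\infty(\R^n) / \mathfrak{m}_{\boldsymbol{0}}^\infty \cong \Rseries$ inherits a $C^\infty$-ring structure, and that structure is visibly the one given by the formula.

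Once well-definedness is in hand, functoriality is routine: for the $i$-th projection $\pi_i$ the composite $\pi_i \circ \braket{\tilde g_1, \dots, \tilde g_m}$ is $\tilde g_i$, whose Taylor series at $\boldsymbol{0}$ is $g_i$; and for composable smooth maps $h \colon \R^k \to \R^\ell$ and $f \colon \R^\ell \to \R$, the identity $\Rseries(f \circ h) = \Rseries(f) \circ \Rseries(h)$ reduces, after choosing smooth lifts, to the fact that the Taylor series of a composite is determined by and computed from the Taylor series of the factors --- again the same Faà di Bruno principle. Product preservation is automatic from the identification $\Rseries^n = \Rseries(\R^n)$ built into the statement, so no separate argument is needed.
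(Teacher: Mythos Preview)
The paper does not actually prove this theorem: it appears in the preliminaries section, which the paper explicitly says reviews facts ``without proofs'', and the statement is attributed to Moerdijk--Reyes via Borel's theorem. Your structural repackaging---realising $\Rseries$ as the quotient $C^\infty(\R^n)/\mathfrak{m}_{\boldsymbol{0}}^\infty$, with Borel's theorem supplying surjectivity of the Taylor map and the fact that any ring-theoretic ideal is a $C^\infty$-ideal (the content of the paper's \Cref{lem:quot-ring-ideal}) supplying well-definedness---is exactly the cited argument, so your proposal is correct and on the intended track. The Fa\`a di Bruno bookkeeping you sketch first also works but is heavier than necessary: once you invoke Hadamard's lemma to see that $f(\tilde g_1,\dots,\tilde g_m)-f(\tilde g_1',\dots,\tilde g_m')$ lies in the ideal generated by the differences $\tilde g_i-\tilde g_i'$, flatness at $\boldsymbol{0}$ is immediate without any coefficient-by-coefficient combinatorics.
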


The $C^\infty$-rings of central interest in this paper are \emph{Weil algebras}, and have a deep connection with $\Rseries$:

\begin{definition}[Weil algebra]
  A \emph{Weil algebra} $W$ is an associative $\R$-algebra which can be written as $W = \R[X_1, \dots, X_n]/I$ for some ideal $I \subseteq \R[\X]$ such that $\braket{X_1, \dots, X_n}^k \subseteq I$ for some $k \in \N$.
\end{definition}
It follows that a Weil algebra $W$ is finite-dimensional as a $\R$-linear space and hence $I$ is a \emph{zero-dimensional} ideal.
A Weil algebra $W$ can be regarded as a real line $\R$ augmented with nilpotent infinitesimals $d_i = {[X_i]}_I$.
In what follows, we identify an element $\boldsymbol{u} \in W$ of a $k$-dimensional Weil algebra $W$ with a $k$-dimensional vector $\boldsymbol{u} = (u_1, \dots, u_k) \in \R^k$ of reals.

Although it is unclear from the definition, Weil algebras have the canonical $C^\infty$-structure.
First note that, if $I$ is zero-dimensional, we have $\R[\X]/I \simeq \R\llbracket \X \rrbracket /I$.
Hence, in particular, any Weil algebra $W$ can also be regarded as a quotient ring of the formal power series by zero-dimensional ideal.
Thus, together with \Cref{thm:series-is-smooth}, the following lemma shows that any Weil algebra $W$ has the canonical $C^\infty$-ring structure:

\begin{lemma}[{Implicitly in Lawvere~\cite{lawvere1979categorical}; See~\cite[1.2 Proposition]{Moerdijk:1991aa}}]\label{lem:quot-ring-ideal}
  For any $C^\infty$-ring $A$ and a ring-theoretical ideal $I \subseteq A$, the quotient ring $A/I$ again has the canonical $C^\infty$-ring structure induced by the canonical quotient mapping:
  \[
    (A/I)(f)([x_1]_I, \dots, [x_m]_I) \coloneqq \left[ A(f)(x_1, \dots, x_m) \right]_I,
  \]
  where $x_i \in A$ and $f: \R^m \smoothto \R$.
  In particular, the $C^\infty$-structure of Weil algebra $W$ is induced by the canonical quotient mapping to that of $\Rseries$.
\end{lemma}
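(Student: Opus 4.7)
The plan is to verify that the proposed formula defines (a) a well-defined map on cosets for each smooth $f$, and (b) a product-preserving functor $\CartSp \to \Sets$ when assembled over all smooth maps. Once these hold, the second claim (that any Weil algebra inherits its $C^\infty$-structure from $\Rseries$) is an instance of the first: by the zero-dimensionality remark preceding the lemma, each Weil algebra can be written as $\Rseries / I$, and $\Rseries$ is a $C^\infty$-ring by \Cref{thm:series-is-smooth}.

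The main obstacle is (a): I must show $A(f)(x_1, \dots, x_m) \equiv A(f)(y_1, \dots, y_m) \pmod{I}$ whenever $x_i \equiv y_i \pmod{I}$ for every $i$. The standard tool is Hadamard's lemma, which for any smooth $f: \R^m \to \R$ produces smooth $h_i: \R^{2m} \to \R$ (for $1 \le i \le m$) such that
\[
  f(u) - f(v) = \sum_{i=1}^m (u_i - v_i)\, h_i(u, v)
\]
as an identity of smooth maps $\R^{2m} \to \R$ (obtained by integrating $\partial_i f$ along the segment from $v$ to $u$). Because $A$ is a product-preserving functor into $\Sets$, it turns addition, subtraction, and multiplication of smooth maps into the corresponding $\R$-algebra operations on $A(\R) = A$; applying $A$ to both sides of the displayed identity (and unwinding composites and projections via functoriality) yields
\[
  A(f)(x_1, \dots, x_m) - A(f)(y_1, \dots, y_m) = \sum_{i=1}^m (x_i - y_i)\, A(h_i)(x_1, \dots, x_m, y_1, \dots, y_m)
\]
inside $A$. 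Since $I$ is an ideal of $A$ and each $x_i - y_i \in I$, the right-hand side lies in $I$, which gives (a).

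Part (b) is routine bookkeeping from (a). Functoriality, $(A/I)(g \circ \braket{f_1, \dots, f_n}) = (A/I)(g) \circ \braket{(A/I)(f_1), \dots, (A/I)(f_n)}$, and preservation of identity maps follow from the corresponding identities in $A$ by quotienting pointwise. Product preservation $(A/I)^n \cong (A/I)(\R^n)$ reduces to that of $A$ via the quotient map $A^n \twoheadrightarrow (A/I)^n$, which commutes with the coordinate projections that $A$ and $A/I$ assign to the projection maps of $\CartSp$. One small sanity check worth making explicitly is that the $\R$-algebra operations on $A/I$ arising from this $C^\infty$-structure (via the smooth maps $+, \cdot, -: \R^2 \to \R$ and constants $\R^0 \to \R$) coincide with the usual quotient $\R$-algebra operations, so that the very hypothesis ``$I$ is a ring-theoretic ideal of $A$'' remains meaningful — this is immediate from the construction since the lift of $+, \cdot$ under $A/I$ is, by definition, the lift under $A$ followed by the quotient map.
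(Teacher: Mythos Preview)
Your argument is correct: Hadamard's lemma is exactly the right tool to establish well-definedness of the lifted operations on cosets, and the functoriality/product-preservation checks are routine as you indicate. However, there is nothing to compare against here: the paper does not supply its own proof of this lemma. It appears in the preliminaries section, which the author explicitly says reviews classical facts ``without proofs'', and the statement is attributed to Lawvere and to Moerdijk--Reyes~\cite[1.2 Proposition]{Moerdijk:1991aa}. Your proof via Hadamard's lemma is in fact the standard argument one finds in that reference, so you have recovered the cited proof rather than diverged from anything in the paper.
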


\section{Connection between Automatic Differentiation and Weil Algebras}
\label{sec:ad-and-weils}
In this section, based on the basic facts on $C^\infty$-rings and Weil algebras reviewed in \Cref{sec:prel}, we describe the connection of automatic differentiation (AD) and Weil algebra.

\emph{Forward-mode} AD is a technique to compute a value and differential coefficient of given univariate composition of smooth function efficiently.
It can be implemented by ad-hoc polymorphism (or equivalently, function overloading).
For detailed implementation, we refer readers to Elliott~\cite{Elliott2009-beautiful-differentiation} and Kmett's \texttt{ad} package~\cite{Kmett:2010aa}.

Briefly speaking, in Forward-mode AD, one stores both the value and differential coefficient simultaneously, say in a form $f(x) + f'(x) d$ for $d$ an indeterminate variable.
Then, when evaluating composite functions, one uses the Chain Rule for implementation:
\[
  \frac{\mathrm{d}}{\mathrm{d}x}(g \circ f)(x) = f'(x) g'(f(x)).
\]

The following definitions of functions on dual numbers illustrate the idea:

\begin{align*}
  (a_1 + b_1 d) + (a_2 + b_2 d) &= (a_1 + a_2) + (b_1 + b_2)d\\
  (a_1 + b_1 d) \times (a_2 + b_2 d) &= a_1 a_2 + (a_1 b_2 + a_2 b_1)d\\
  \cos(a_1 + b_1 d) &= \cos(a_1) - b_1 \sin(a_1) d
\end{align*}

The last equation for $\cos$ expresses the nontrivial part of Forward-mode AD.
As mentioned above, we regard $a_1 + b_1 d$ as a pair $(a_1, b_1) = (f(x), f'x)$ of value and differential coefficient of some smooth function $f$ at some point $x$.
So if $a_2 + b_2 d = \cos(a_1 + b_1 d)$, we must have $a_2 = \cos(f(x)) = \cos a_1$ and $b_2 = \frac{\mathrm d}{\mathrm{d} x} \cos(f(x)) = -b_1 \sin(a_1)$ by Chain Rule.
The first two equations for addition and multiplication suggest us to regard operations on Forward-mode AD as extending the algebraic structure of $\R[d] = \R[X]/X^2$.
Indeed, first-order Forward-mode AD can be identified with the arithmetic on \emph{dual numbers}:
\begin{definition}
  The \emph{dual number ring} is a Weil algebra $\R[X]/{X^2}$.
  We often write $d = {[X]}_I \in \R[d]$ and $\R[d] \coloneq \R[X]/{X^2}$.

  We use an analogous notation for multivariate versions:
  
  \[
    \R[d_1, \dots, d_k] \coloneq \R[\X]/\braket{X_1^2, \dots, X_k^2}.
  \]
\end{definition}

Since the dual number ring $\R[d]$ is a Weil algebra, one can apply \Cref{thm:series-is-smooth} and \Cref{lem:quot-ring-ideal} to compute its $C^\infty$-structure.
Letting $f: \R \smoothto \R$ be a univariate smooth function, then we can derive the $C^\infty$-lifting $\R[d](f): \R[d] \to \R[d]$ as follows:

\begin{alignat*}{3}
  &&&\R\llbracket{}X\rrbracket(f)(a + bX) \\
  &&=\:&f(a) + \frac{\mathrm d}{\mathrm{d} x}(f(a + bx))(0) X + \cdots
  &\quad&(\text{by \Cref{thm:series-is-smooth}})\\
  &&=\:&f(a) + b f'(a) X + \cdots\\
  &&\xrightarrow{X \mapsto d}\:&f(a) + bf'(a) d,\\
  &\therefore&& \R[d](f)(a + bd) = f(a) + b f'(a) d.
  && (\text{by \Cref{lem:quot-ring-ideal}}) \tag{\ensuremath{\ast}}
  \label{eqn:dual-smooth}
\end{alignat*}

One can notice that the derived $C^\infty$-structure in \eqref{eqn:dual-smooth} is exactly the same as how to implement individual smooth functions for Forward-mode AD.
This describes the connection between Forward-mode AD and dual numbers: Forward-mode AD is just a (partial) implementation of the $C^\infty$-structure of the dual number ring $\R[d]$.

Let us see how this extends to higher-order cases.
The most na\"{i}ve way to compute higher-order derivatives of smooth function is just to successively differentiating it.
This intuition can be expressed by duplicating the number of the basis of dual numbers:
\begin{theorem}\label{thm:univ-partial-duals}
  For any $f: \R \smoothto \R$ and $\boldsymbol{x} \in \R^n$, we have:
  \[
    \R[d_1, \dots, d_k](f)(x + d_1 + \cdots + d_n) 
    = \sum_{0 \leq i \leq n} f^{(i)}(x)\sigma^i_n(\vec{d}),
  \]
  where, $\sigma^i_k(x_1, \dots, x_k)$ denotes the $k$-variate elementary symmetric polynomial of degree $i$.
\end{theorem}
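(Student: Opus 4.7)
The plan is to reduce to the Taylor-series computation of \Cref{thm:series-is-smooth} and then push the result down to the quotient $\R[d_1,\dots,d_n] = \R[\X]/\braket{X_1^2,\dots,X_n^2}$ via \Cref{lem:quot-ring-ideal}. First I would work in $\R\llbracket X_1,\dots,X_n\rrbracket$ and compute $\R\llbracket \X\rrbracket(f)(g)$ for the single power series $g = x + X_1 + \cdots + X_n$ (a ``constant plus linear'' element, where $x \in \R$ is treated as a scalar). Applying \Cref{thm:series-is-smooth} with $m = 1$ gives
\[
  \R\llbracket \X\rrbracket(f)(g) = \sum_{\alpha \in \N^n} \frac{\X^\alpha}{\alpha!}\, D^\alpha(f \circ g)(\boldsymbol{0}).
\]

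Next I would compute the mixed partials. Since $g$ depends linearly on each $X_i$, applying $\partial/\partial X_i$ to $f(x + X_1 + \cdots + X_n)$ any number of times is the same as applying $d/dt$ to $f(x+t)$ the same total number of times (then evaluating at $t=0$). Concretely, $D^\alpha(f\circ g)(\boldsymbol{0}) = f^{(|\alpha|)}(x)$, where $|\alpha| = \alpha_1 + \cdots + \alpha_n$. Hence in the series ring,
\[
  \R\llbracket \X\rrbracket(f)(x + X_1 + \cdots + X_n) = \sum_{\alpha \in \N^n} \frac{\X^\alpha}{\alpha!}\, f^{(|\alpha|)}(x).
\]

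Now I would apply \Cref{lem:quot-ring-ideal} to the ideal $I = \braket{X_1^2, \dots, X_n^2}$. Under the quotient map, $\X^\alpha \equiv 0$ whenever some $\alpha_i \geq 2$, so only multi-indices $\alpha \in \{0,1\}^n$ survive. For those, $\alpha! = 1$ and $\X^\alpha = \prod_{i \in S} d_i$, where $S = \{i : \alpha_i = 1\}$; moreover $|\alpha| = |S|$. Collecting the surviving terms by the cardinality of $S$ and recognising $\sum_{|S|=i} \prod_{j \in S} d_j = \sigma^i_n(d_1,\dots,d_n)$ gives
\[
  \R[d_1,\dots,d_n](f)(x + d_1 + \cdots + d_n) = \sum_{i=0}^{n} f^{(i)}(x)\,\sigma^i_n(\vec d),
\]
as required.

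The only genuine content is the identity $D^\alpha(f\circ g)(\boldsymbol{0}) = f^{(|\alpha|)}(x)$, which is a one-line consequence of the chain rule applied iteratively to the linear map $(X_1,\dots,X_n)\mapsto x + \sum_i X_i$; the rest is bookkeeping about which monomials survive modulo $\braket{X_i^2}$ and repackaging them as elementary symmetric polynomials. So I do not anticipate a real obstacle, beyond being careful about the indexing ($n$ variables versus the notational ``$k$'' used in the statement) and about treating $x$ as a scalar constant rather than one of the indeterminates of the power series ring.
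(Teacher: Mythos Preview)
Your argument is correct. The computation $D^\alpha(f\circ g)(\boldsymbol 0) = f^{(|\alpha|)}(x)$ is exactly right for $g(\X) = x + X_1 + \cdots + X_n$, and the reduction modulo $\braket{X_i^2}$ leaves precisely the square-free monomials, which group by degree into the elementary symmetric polynomials. Nothing is missing.

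The paper, however, does not argue this way: it simply says the result follows by ``an easy induction''. The intended induction is on $n$, using the single-dual-number formula $\R[d](f)(a + d) = f(a) + f'(a)\,d$ (already derived in the paper just before the statement) together with the tensor decomposition $\R[d_1,\dots,d_{n+1}] \simeq \R[d_1,\dots,d_n][d_{n+1}]$; the inductive step then relies on the standard recurrence $\sigma^i_{n+1}(\vec d,d_{n+1}) = \sigma^i_n(\vec d) + d_{n+1}\,\sigma^{i-1}_n(\vec d)$. Your direct route via \Cref{thm:series-is-smooth} and \Cref{lem:quot-ring-ideal} is arguably cleaner and more in the spirit of the paper's own machinery: it explains \emph{why} the elementary symmetric polynomials appear (they index the square-free monomials surviving in the quotient), whereas the inductive proof hides this behind the recurrence. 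The inductive proof, on the other hand, is slightly more elementary in that it never needs the full multivariate Taylor formula and mirrors the informal motivation in the paper (``successively differentiating'' by adding one dual generator at a time).
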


The above can be proven by an easy induction.

However, as one can easily see, terms in $\Rseries/\braket{X_i^2}_i$ can grow exponentially and include duplicated coefficients.
How could we reduce such duplication and save space consumption? ---this is where general Weil algebras beyond (multivariate) dual numbers can play a role.
We can get derivatives in more succinct representation with \emph{higher infinitesimal} beyond dual numbers:

\begin{lemma}\label{lem:higher-infinitesimal}
  Let $I = \braket{X^{n + 1}}, W = \R[X]/I$ and $\varepsilon = {[X]}_I$ for $n \in \N$.
  Given $f: \R \smoothto \R$ and $a \in \R$, we have:
  \[
    W(f)(a + \varepsilon)
    = \sum_{k \leq n} \frac{f^{(k)}(a)}{k !} \varepsilon^k.
  \]
\end{lemma}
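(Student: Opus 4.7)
The plan is to reduce the statement to a direct computation in the formal power series ring $\Rseries$ (in one variable $X$) and then push the result down to $W$ via the quotient map, using \Cref{thm:series-is-smooth} and \Cref{lem:quot-ring-ideal}.

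First, I would compute the $C^\infty$-lifting of $f$ applied to the series $a + X \in \Rseries$. Since there is only one variable, a multi-index $\alpha \in \N$ is just a nonnegative integer $k$, and the inner smooth map $f \circ \langle a + x\rangle$ is simply the one-variable map $x \mapsto f(a + x)$. Hence \Cref{thm:series-is-smooth} yields
\[
  \Rseries(f)(a + X) \;=\; \sum_{k \in \N} \frac{X^k}{k!}\, \frac{\mathrm d^k}{\mathrm d x^k} f(a + x)\bigg|_{x = 0} \;=\; \sum_{k \in \N} \frac{f^{(k)}(a)}{k!}\, X^k,
\]
which is just the usual Taylor expansion of $f$ at $a$.

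Next, I would invoke the remark preceding \Cref{lem:quot-ring-ideal}: since $I = \langle X^{n+1}\rangle$ is zero-dimensional, $W = \R[X]/I \simeq \Rseries/I$, and the canonical $C^\infty$-structure on $W$ is induced from that on $\Rseries$ via the quotient map $q: \Rseries \twoheadrightarrow W$. Noting that $q(a + X) = a + \varepsilon$, \Cref{lem:quot-ring-ideal} gives
\[
  W(f)(a + \varepsilon) \;=\; \bigl[\,\Rseries(f)(a + X)\,\bigr]_I \;=\; \sum_{k \in \N} \frac{f^{(k)}(a)}{k!}\, \varepsilon^k.
\]
Finally, since $\varepsilon^{n+1} = 0$ in $W$, all terms with $k > n$ vanish, leaving precisely $\sum_{k \leq n} \frac{f^{(k)}(a)}{k!}\,\varepsilon^k$, as claimed.

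The argument is essentially bookkeeping on top of the two cited results, so I do not anticipate a genuine obstacle; the only mild subtlety is observing that the Taylor expansion formula of \Cref{thm:series-is-smooth} specialises, in the univariate case with argument $a + X$, to the ordinary Taylor series of $f$ at $a$, which is what makes the nilpotent truncation give exactly the prefix of the Taylor polynomial up to degree $n$.
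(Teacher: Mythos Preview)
Your proof is correct and follows exactly the pattern the paper itself uses. The paper does not give a separate proof of this lemma, but immediately before it (in deriving \eqref{eqn:dual-smooth} for the dual number ring) the paper computes $\Rseries(f)(a+bX)$ via \Cref{thm:series-is-smooth} and then passes to the quotient via \Cref{lem:quot-ring-ideal}; your argument is precisely the same computation carried out for $\R[X]/\langle X^{n+1}\rangle$ instead of $\R[X]/\langle X^2\rangle$.
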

In this representation, we have only $(n + 1)$-terms, and hence it results in succinct and efficient representation of derivatives.

If we duplicate such higher-order infinitesimals as much as needed, one can likewise compute \emph{multivariate} higher-order derivatives all at once, up to some multidegree $\beta$:
\begin{lemma}
  Let $I = \Braket{X_i^{\beta_i + 1}\ |\ i \leq m}$, $W = \R[X_1, \dots, X_m]/I$, and $\varepsilon_i = {[X_i]}_I$ for $\beta = {(\beta_i)}_{i \leq m} \in \N^m$.
  For $f: \R^m \smoothto \R$ and $\boldsymbol{a} = {(a_i)}_{i \leq m} \in \R^m$, we have:
  \[
    W(f)(a_1 + \varepsilon_1, \dots, a_m + \varepsilon_m) =
      \sum_{\delta_i \leq \beta_i} 
      \frac{D^\delta f}{\delta !}(\boldsymbol{a})\ \varepsilon_1^{\delta_1} \cdots \varepsilon_m^{\delta_m}.
  \]
\end{lemma}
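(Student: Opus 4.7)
The plan is to mirror the single-variable argument of \Cref{lem:higher-infinitesimal}, lifting it to the multivariate setting by means of \Cref{thm:series-is-smooth} followed by \Cref{lem:quot-ring-ideal}. First I would compute the $C^\infty$-lifting in the ambient power series ring $\Rseries$ on the elements $g_i = a_i + X_i$, using the Taylor-expansion formula of \Cref{thm:series-is-smooth}:
\[
  \Rseries(f)(a_1 + X_1, \dots, a_m + X_m) = \sum_{\alpha \in \N^m} \frac{\X^\alpha}{\alpha!} D^\alpha(f \circ \braket{a_1 + X_1, \dots, a_m + X_m})(\boldsymbol{0}).
\]

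The central small calculation is the identification of the coefficient. Because each component $g_i(\X) = a_i + X_i$ has $\partial_{X_j} g_i = \delta_{ij}$, an inductive application of the chain rule shows
\[
  D^\alpha(f \circ \braket{a_1 + X_1, \dots, a_m + X_m})(\boldsymbol{0}) = (D^\alpha f)(\boldsymbol{a}),
\]
so the series above simplifies to $\sum_{\alpha \in \N^m} \frac{D^\alpha f(\boldsymbol{a})}{\alpha!} \X^\alpha$, which is the formal Taylor expansion of $f$ around $\boldsymbol{a}$. I would handle this step carefully, since it is the only nontrivial analytic input; once established, the rest is purely algebraic.

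Next I would invoke \Cref{lem:quot-ring-ideal} to transport this identity to the quotient $W = \Rseries/I$: writing $[\cdot]_I$ for the quotient map, the lifting in $W$ is
\[
  W(f)(a_1 + \varepsilon_1, \dots, a_m + \varepsilon_m) = \left[\sum_{\alpha \in \N^m} \frac{D^\alpha f(\boldsymbol{a})}{\alpha!} \X^\alpha\right]_I.
\]
Since $I = \braket{X_i^{\beta_i + 1} \mid i \leq m}$, a monomial $\X^\alpha$ maps to $0$ in $W$ whenever some $\alpha_i \geq \beta_i + 1$, and otherwise descends to $\varepsilon_1^{\alpha_1}\cdots\varepsilon_m^{\alpha_m}$. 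Restricting the sum accordingly to multi-indices $\delta$ with $\delta_i \leq \beta_i$ yields exactly the claimed expression, concluding the proof.

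The only real obstacle is the multivariate chain-rule identification of the Taylor coefficients; all other steps are formal consequences of \Cref{thm:series-is-smooth}, \Cref{lem:quot-ring-ideal}, and the explicit form of $I$. No additional combinatorial bookkeeping is needed because the sum in \Cref{thm:series-is-smooth} is already indexed by multi-indices $\alpha \in \N^m$, which aligns directly with the monomial basis of $W$.
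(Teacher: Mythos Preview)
Your proposal is correct and follows exactly the route the paper itself lays out: the paper does not spell out a proof of this lemma, but it derives the dual-number case \eqref{eqn:dual-smooth} by precisely the same two-step recipe---first compute the lifting in $\Rseries$ via \Cref{thm:series-is-smooth}, then pass to the quotient via \Cref{lem:quot-ring-ideal}---and your argument is the straightforward multivariate extension of that computation. The chain-rule identification $D^\alpha(f\circ\braket{a_1+X_1,\dots,a_m+X_m})(\boldsymbol{0})=(D^\alpha f)(\boldsymbol{a})$ is the only step requiring care, and you have it right.
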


Note that the formal power series ring $\Rseries$ can be viewed as the inverse limit of $\R[\X]/\braket{\X^\beta}$'s.
In other words, if we take a limit $\beta_i \to \infty$, we can compute any higher derivative up to any finite orders; this is exactly what \emph{Tower-mode} AD aims at, modulo factor $\frac{1}{\beta!}$.

In this way, we can view AD as a technique to compute higher derivatives simultaneously by partially implementing a certain $C^\infty$-ring\footnote{Such implementation is inherently a partial approximation: there are $2^{\aleph_0}$-many smooth functions, but there are only countably many computable (floating) functions.}.
Forward-mode AD (of first-order) computes the $C^\infty$-structure of the dual number ring $\R[d]$; Tower-mode AD computes that of the formal power series ring $\Rseries$ (modulo reciprocal factorial).

So far, we have used a Weil algebra of form $\Rseries/I$.
So, do we need to define new ideals by hand whenever one wants to treat multiple variables?
The answer is no:

\begin{lemma}[{See~\cite[4.19 Corollary]{Moerdijk:1991aa}}]
  \label{thm:quot-tensor}
  For ideals $I \subseteq \R\llbracket\X\rrbracket$ and $J \subseteq \R\llbracket{}\Y\rrbracket$, we have:
  \[
    \Rseries/I \otimes_\R \R\llbracket{}\boldsymbol{Y}\rrbracket/J \simeq
    \R\llbracket{}\X, \boldsymbol{Y}\rrbracket/(I, J),
  \]
  where $\otimes_\R$ is a tensor product of $C^\infty$-rings.
\end{lemma}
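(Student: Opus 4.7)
The plan is to exploit the fact that $\otimes_\R$ in the category of $C^\infty$-rings is the coproduct, and verify that $\R\llbracket \X, \Y\rrbracket/(I, J)$ satisfies the corresponding universal property. This decomposes naturally into two subproblems: first, the base case $\Rseries \otimes_\R \R\llbracket \Y\rrbracket \simeq \R\llbracket \X, \Y\rrbracket$ without ideals; then the general case, which follows from the categorical principle that quotients commute with coproducts.

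For the base case, I would apply Borel's theorem to identify $\Rseries$ with $C^\infty(\R^n)/\mathfrak{m}_0^\infty$, where $\mathfrak{m}_0^\infty$ is the ideal of smooth functions whose partial derivatives of all orders vanish at the origin. Combined with the classical fact that the $C^\infty$-tensor product of free $C^\infty$-rings satisfies $C^\infty(\R^n) \otimes_\R C^\infty(\R^m) \simeq C^\infty(\R^{n+m})$, the base case reduces to checking that the ideal generated by $\mathfrak{m}_0^\infty(\X)$ and $\mathfrak{m}_0^\infty(\Y)$ inside $C^\infty(\R^{n+m})$ coincides with $\mathfrak{m}_0^\infty(\X, \Y)$. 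One inclusion is immediate; the nontrivial direction requires decomposing a function flat at the origin in the joint variables into a sum of functions flat separately in $\X$ and in $\Y$, which can be obtained from an iterated application of Hadamard's lemma for flat functions.

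Once the base case is in hand, passing to quotients is essentially formal: $C^\infty$-ring quotients commute with coproducts, as already exploited in \Cref{lem:quot-ring-ideal}. Applying this yields
\[
  (\Rseries/I) \otimes_\R (\R\llbracket \Y\rrbracket/J)
  \simeq (\Rseries \otimes_\R \R\llbracket \Y\rrbracket)/(\iota_\X(I),\ \iota_\Y(J))
  \simeq \R\llbracket \X, \Y\rrbracket/(I, J),
\]
where $\iota_\X, \iota_\Y$ are the canonical inclusions into the coproduct and the second isomorphism is the base case. Because these inclusions send a power series in a subset of the variables to the same power series viewed inside $\R\llbracket \X, \Y\rrbracket$, the images of $I$ and $J$ are precisely the generators of the ideal $(I, J)$ in the statement.

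The main obstacle is the ideal-theoretic identity $(\mathfrak{m}_0^\infty(\X), \mathfrak{m}_0^\infty(\Y)) = \mathfrak{m}_0^\infty(\X, \Y)$ in the base case, which ultimately rests on a refined form of Hadamard's lemma for flat functions and is worked out in Chapter I of Moerdijk--Reyes. For the purposes of this paper, one may invoke the cited 4.19 Corollary as a black box and regard the rest of the argument as a routine quotient-of-coproduct calculation.
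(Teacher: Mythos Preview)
The paper does not supply a proof of this lemma at all: it is stated with a bare citation to Moerdijk--Reyes, Corollary~4.19, and used as a black box thereafter. So there is nothing in the paper to compare your argument against.

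That said, your outline is the standard route to this result and is essentially what one finds in the cited reference: identify $\Rseries$ with $C^\infty(\R^n)/\mathfrak{m}_0^\infty$ via Borel, use $C^\infty(\R^n)\otimes_\R C^\infty(\R^m)\simeq C^\infty(\R^{n+m})$, and then push the quotients through the coproduct. You have correctly isolated the only nontrivial step, namely the equality $(\mathfrak{m}_0^\infty(\X),\mathfrak{m}_0^\infty(\Y))=\mathfrak{m}_0^\infty(\X,\Y)$ inside $C^\infty(\R^{n+m})$, which needs a Hadamard-type decomposition for flat functions. One small wording issue: in your final paragraph you suggest invoking ``the cited 4.19 Corollary as a black box'' for that step, but 4.19 \emph{is} the statement under discussion, so phrased that way the argument becomes circular. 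What you presumably mean is to cite the relevant flat-Hadamard lemma from Chapter~I of Moerdijk--Reyes for the ideal identity, and then carry out the quotient-of-coproduct computation yourself; with that adjustment the proposal is fine, and in fact goes beyond what the paper itself provides.
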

Thanks to this lemma, we don't have to define $I$ by hand every time, but can take tensor products to compose predefined Weil algebras to compute multivariate and higher-order derivatives.
Examples of such calculations will be presented in \Cref{sec:examples}.

\section{Algorithms}\label{sec:alg}
In this section, we will present the main results of this paper: concrete algorithms to compute the $C^\infty$-structure of arbitrary Weil algebra and their tensor products.
For examples of applications of the algorithm presented here, the reader can skip to the next \cref{sec:examples} to comprehend the actual use case.

\subsection{Computing $C^\infty$-structure of Weil algebra}
\label{sec:general-weil-algs}
Let us start with algorithms to compute the $C^\infty$-structure of a general Weil algebra.
Roughly speaking, the algorithm is threefold:

\begin{enumerate}
  \item A procedure deciding Weil-ness of an ideal and returning data required to compute the $C^\infty$-structure (\textsc{WeilTest}, \Cref{alg:weil-test}),
  \item A procedure to compute the lifting  $W(f): W^m \to W$ to a Weil algebra $W$ from $\Rseries(f)$ (\textsc{LiftWeil}, \Cref{alg:smooth-weil}), and
  \item A procedure to lift smooth map $f: \R^m \to \R$ to the $n$-variate formal power series ring $\Rseries$ (\textsc{LiftSeries}, \Cref{lift-series}).\label{step:lift-series}
\end{enumerate}

We start with Weil-ness testing.
First, we define the basic data needed to compute the $C^\infty$-structure of Weil algebras:

\begin{definition}[Weil settings]
  The \emph{Weil setting} of a Weil algebra $W$ consists of the following data:
  \begin{enumerate}[ref=(\arabic*)]
    \item Monomial basis $\set{\boldsymbol{b}_1, \dots, \boldsymbol{b}_\ell}$ of $W$,
    \item $M$, the multiplication table of $W$ in terms of the basis,
    \item $(k_1, \dots, k_n) \in \mathbb{N}^n$ such that $k_i$ is the maximum satisfying $X_i^{k_i} \notin I$ for each $i$, and
    \item $\NonVan_W$, a table of representations of non-vanishing monomials in $W$;
    i.e.\ for any $\alpha = (\alpha_1, \dots, \alpha_n) \in \N^n$, if $\alpha_i \leq k_i$ for all $i$, then $\NonVan_W(\X^\alpha) = (c_1, \dots, c_n) \in \R^k$ satisfies $[\X^\alpha]_I = \sum_i c_i \boldsymbol{b}_i$.\label{item:nonvan}
  \end{enumerate}
\end{definition}

A basis and multiplication table allow us to calculate the ordinary $\R$-algebra structure of Weil algebra $W$.
The latter two data, $\vec{k}$ and $\NonVan_W$, are essential in computing the $C^\infty$-structure of $W$.
In theory, \ref{item:nonvan} is unnecessary if one stores a Gr\"{o}bner basis of $I$;
but since normal form calculation modulo $G$ can take much time in some case, we don't store $G$ itself and use the precalculated data $\NonVan$.
It might be desirable to calculate $\NonVan_W$ as lazily as possible.
Since it involves Gr\"{o}bner basis computation it is more desirable to delay it as much as possible and do in an on-demand manner.

With this definition, one can decide Weilness and compute their settings:

\begin{algorithm}[\textsc{WeilTest}]\label{alg:weil-test}
  \hspace{1em}\vspace{-.25em}
  \begin{description}
    \item[Input] An ideal $I \subseteq \mathbb{R}[X_1, \dots, X_n]$
    \item[Output] Returns the Weil settings of $W = \mathbb{R}[\boldsymbol{X}]/I$ if it is a Weil algebra; otherwise \verb|No|.
    \item[Procedure] \textup{\textsc{WeilTest}}
  \end{description}

  \begin{alg}
G <- calcGroebnerBasis(I)
If @$I$@ is not zero-dimensional
  Return No
@$\set{\boldsymbol{b}_1, \dots, \boldsymbol{b}_\ell}$@ <- Monomial basis of @$I$@
@$M$@ <- the Multiplication table of @$W$@
For i in 1..n@\label{line:weil-test:radical-start}@
  @$p_i$@ <- the monic generator of @$I \cap \R[X_i]$@
  If @$p_i$@ is not a monomial
    Return No
  @$k_i$@ <- @$\deg(p_i) - 1$@@\label{line:weil-test:radical-end}@
@$\NonVan_W$@ <- {}
For @$\alpha$@ in @$\Set{\alpha \in \N^n | \alpha_i \leq k_i \; \forall i \leq \ell}$@
  @$c_1 \boldsymbol{b}_1 + \cdots + c_\ell \boldsymbol{b}_\ell$@ <- @$\rem{\X^\alpha}{G}$@
  @$\NonVan_W(\X^\alpha)$@ <- (@$c_1, \dots, c_\ell$@)
Return (@$\vec{\boldsymbol{b}}, M, \vec{k}, \NonVan_W$@)
\end{alg}
\end{algorithm}

\begin{theorem}
  \Cref{alg:weil-test} terminates and returns expected values.
\end{theorem}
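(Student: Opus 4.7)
The plan is to verify termination and correctness in parallel, treating the two possible outputs --- \texttt{No} or a full Weil setting --- as separate cases.

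For termination, I would walk through each block of the procedure and check that every operation is finite. Buchberger's algorithm produces $G$ in finitely many steps; zero-dimensionality of $I$ is then decidable from $G$ (it holds iff for each $i$ some pure power of $X_i$ appears among the leading monomials of $G$). When $I$ is zero-dimensional the set of standard monomials --- those not divisible by any leading monomial of $G$ --- is finite, giving a finite basis and, after finitely many normal-form reductions modulo $G$, a multiplication table. Each monic generator $p_i$ of the elimination ideal $I \cap \R[X_i]$ can be extracted from a lex or elimination Gr\"obner basis; and the final loop ranges over the finite set $\Set{\alpha \in \N^n | \alpha_i \leq k_i\ \text{for all}\ i}$, each iteration being one further normal-form reduction.

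The central correctness claim is the equivalence: $W = \R[\X]/I$ is a Weil algebra iff $I$ is zero-dimensional and every $p_i$ is a monomial $X_i^{m_i}$. For the forward direction, from $\braket{X_1, \dots, X_n}^k \subseteq I$ I would deduce $X_i^k \in I \cap \R[X_i]$, so the monic generator $p_i$ divides $X_i^k$ and is therefore a pure power of $X_i$; zero-dimensionality is then automatic since $W$ is a quotient of the finite-dimensional algebra $\R[\X]/\braket{X_1^k, \dots, X_n^k}$. For the reverse direction, a short pigeonhole argument shows that if each $p_i = X_i^{k_i+1}$, then any monomial of total degree at least $\sum_i (k_i + 1)$ is divisible by some $X_i^{k_i+1}$ and hence lies in $I$, yielding $\braket{X_1, \dots, X_n}^k \subseteq I$ for $k = \sum_i (k_i + 1)$. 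Together this shows that the algorithm returns \texttt{No} precisely when $W$ fails to be Weil. In the accepting branch, $k_i = \deg(p_i) - 1$ is by the very definition of $p_i$ the largest integer with $X_i^{k_i} \notin I$, so $\vec{k}$ meets its specification; the basis and multiplication table are correct by the standard Gr\"obner-basis fact that normal-form reduction modulo $G$ yields a well-defined $\R$-linear projection onto the span of the standard monomials; and $\NonVan_W$ stores exactly the basis coordinates of each non-vanishing monomial by construction.

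The main obstacle I expect is pinning down the reverse direction of the equivalence above --- verifying that nilpotency of each individual $X_i$ modulo $I$, which is what monomial-ness of $p_i$ captures, upgrades to nilpotency of the whole maximal ideal $\braket{X_1, \dots, X_n}$ modulo $I$, as the definition of a Weil algebra requires. Once that bridge between the algorithmic per-variable check and the global nilpotency condition is spelled out, the remainder is standard Gr\"obner-basis bookkeeping together with a routine termination audit.
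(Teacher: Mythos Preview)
Your proposal is correct and follows essentially the same line as the paper: defer zero-dimensionality, the monomial basis, and the multiplication table to standard Gr\"obner-basis facts, and handle the nilpotence check via the PID observation that $p_i \mid X_i^k$ forces $p_i$ to be a pure power. In fact you are more careful than the paper --- the reverse implication you single out as the ``main obstacle'' (that per-variable nilpotency yields $\braket{X_1,\dots,X_n}^k \subseteq I$ via pigeonhole) is left entirely implicit in the paper's one-paragraph proof, so your worry there is unfounded and your argument already closes that gap.
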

\begin{proof}
  Algorithms to decide the zero-dimensionality and calculate their multiplication table is well-known (for details, we refer readers to Cox--Little--O'Shea~\cite[Chapter 2]{CLO:2005}).
  So the only non-trivial part is nilpotence detection (\Crefrange{line:weil-test:radical-start}{line:weil-test:radical-end}).
  But, again, this is just a variation of radical calculation algorithm for zero-dimensional ideals.
  Indeed, since each $\R[X_i]$ is a PID, we have $X_i^k \in I \cap R[X_i]$ iff $p_i \divs X_i^k$, hence $p_i$ is a monomial iff $X_i$ is nilpotent in $W$.
\end{proof}
Now that we have the basis and multiplication table at hand, we can calculate the ordinary algebraic operations just by the standard means.

With upper bounds $\vec k$ of powers and representations $\NonVan_W$ of non-vanishing monomials, we can now compute the $C^\infty$-structure of an arbitrary Weil algebra, when given a lifting of smooth mapping $f$ to $\Rseries$:

\begin{algorithm}[\textsc{LiftWeil}]\label{alg:smooth-weil}
  \hfill\vspace{-.25em}
  \begin{description}
    \item[Input]
      $I \subseteq \R[\X]$, an ideal where $W = \R[\X]/I$ is a Weil algebra,
      $\R \llbracket\X\rrbracket(f): \Rseries^m \to \Rseries$, a lifting of a smooth map $f: \R^m \to \R$ to $\Rseries$, and $\vec{\boldsymbol{u}} = (\boldsymbol{u}_1, \dots, \boldsymbol{u}_m) \in W^m$,.
    \item[Output] $\boldsymbol{v} = W(f)(\vec{\boldsymbol u}) \in W$, the value of $f$ at $\vec{\boldsymbol{u}}$ given by $C^\infty$-structure.
    \item[Procedure] \textup{\textsc{LiftWeil}}
  \end{description}
\begin{alg}
(@$\vec{\boldsymbol{b}}$@, M, @$\vec{k}$@, @$\NonVan_W$@) <- WeilTest(I)
g_i <- @$(\boldsymbol{b}_1, \dots, \boldsymbol{b}_k) \cdot \boldsymbol{u}_i \in \R[\X]$@ for i <= m
h = @$\sum_\alpha c_\alpha \X^\alpha$@ <- @$\Rseries(f)(\vec{g})$@
@$\boldsymbol v$@ <- 0
For alpha with @$\alpha_i \leq k_i\, \forall i$@
  @$\boldsymbol{v}$@ <- @$\boldsymbol v$@ + @$c_\alpha \NonVan_W(\X^\alpha)$@
Return @$\boldsymbol{v}$@
\end{alg}
\end{algorithm}

The termination and validity of \Cref{alg:smooth-weil} are clear.
One might feel it problematic that \Cref{alg:smooth-weil} requires \emph{functions} as its input.
This can be \emph{any} smooth computable functions on the coefficient type.
Practically, we expect a composite function of standard smooth floating-point functions as its argument,
for example, it can be $x \mapsto \sin(x)$, $(x, y) \mapsto e^{\sin x}y^2$, and so on.
In the modern programming language -- like Haskell, Rust, LISP, Ruby, etc.\ -- one don't need to worry about their representation, as we can already freely write \emph{higher-order functions} that take functions or closures as their arguments.
Even in the low-level languages such as C/C++, one can use function pointers or whatever to pass an arbitrary function to another function.

Now that we can compute the $\R$-algebraic and $C^\infty$-structure of a Weil algebra solely from its Weil setting, one can hard-code pre-calculated Weil settings for known typical Weil algebras, such as the dual number ring or higher infinitesimal rings of the form $\R[X]/(X^{n+1})$, to reduce computational overheads.

\subsubsection{Computing the $C^\infty$-structure of $\Rseries$}\label{sec:power-series-lifting}
So it remains to compute the $C^\infty$-structure of $\Rseries$.
Thanks to \Cref{thm:series-is-smooth}, we know the precise definition of $C^\infty$-lifting to $\Rseries$:
\[
  \Rseries(f)(g_1, \dots, g_m) = \sum_{\alpha \in \N^n} \frac{\X^\alpha}{\alpha!} D^\alpha(f \circ \braket{g_1, \dots, g_n})(\boldsymbol{0}).
\]
As noted in \Cref{sec:ad-and-weils}, as a $C^\infty$-ring, the formal power series ring is isomorphic to multivariate Tower-mode AD.
It can be implemented in various ways, such as Lazy Multivariate Tower AD~\cite{Pearlmutter:2007aa}, or nested Sparse Tower AD~\cite[{module \texttt{Numeric.AD.Rank1.Sparse}}]{Kmett:2010aa}.
For reference, we include a succinct and efficient variant mixing these two techniques in \Cref{sec:appendix}.

Both Tower-mode AD and formal power series can be represented as a formal power series.
The difference is the interpretation of coefficients in a given series.
On one hand, a coefficient of $\X^\alpha$ in Tower AD is interpreted as the $\alpha$\textsuperscript{th} partial differential coefficient $D^\alpha f(\boldsymbol{a})$, where $\boldsymbol{a} = (g_1(0), \dots, g_m(0))$.
On the other hand, in $\Rseries$ it is interpreted as $\frac{D^\alpha f(\boldsymbol{a})}{\alpha!}$.
To avoid the confusion, we adopt the following convention: Tower-mode AD is represented as a function from monomials $\X^\alpha$ to coefficient $\R$ in what follows, whilst $\Rseries$ as-is.
Note that this is purely for notational and descriptional distinctions, and does not indicate any essential difference.

With this distinction, we use the following notation and transformation:

\begin{definition}
  $\Tower = \Set{f | f: \N^n \to \R }$ denotes the set of all elements of Tower-mode AD algebra.
  We denote $C^\infty$-lifting of $f: \R^m \to \R$ to $\Tower$ by $\Tower(f): \Tower^m \to \Tower$.

  A \emph{reciprocal factorial transformation} $\RF: \Tower \to \Rseries$ is defined as follows:
  \[
    \RF\left(f\right)
    = \sum_{\alpha \in \N^n} \frac{f(\alpha)}{\alpha!} \X^\alpha.
  \]
  Then, the inverse reciprocal factorial transformation is given by:
  \[
  \RF^{-1}\left(\sum_{\alpha \in \N^n} c_\alpha \X^\alpha\right)
  = \lambda (\X^\alpha).\ \alpha ! \cdot c_\alpha.
  \]
\end{definition}

\begin{algorithm}[\textsc{LiftSeries}]\label{lift-series}
\hfill\vspace{-.25em}
\begin{description}
  \item[Input]
    $f: \R^m \smoothto \R$, a smooth function which admits Tower AD,
    $g_1, \dots, g_n \in \Rseries$, formal power series.
  \item[Output] $\Rseries(f)(g_1, \dots, g_m) \in \Rseries$, $C^\infty$-lifting to the formal power series ring.
  \item[Procedure] \textup{\textsc{LiftSeries}}
\end{description}
\begin{alg}
@$\hat{g}_i$@ <- @$\RF^{-1}(g_i)$@
@$\hat{f}$@ <- @$\Tower(f)(\hat{g}_1, \dots, \hat{g}_m)$@
Return @$\RF(\hat{f})$@
\end{alg}
\end{algorithm}

\subsection{Tensor product of Weil algebras}
\label{sec:tensor-algorithm}
As indicated by \Cref{thm:quot-tensor}, tensor products enable us to compose multiple Weil algebras into one and use them to compute higher-order multivariate derivatives.
Here, we give a simple procedure to compute Weil settings of the tensor product.

\begin{algorithm}[WeilTensor]\label{alg:weil-tensor}
\hfill\vspace{-.25em}
\begin{description}
  \item[Input] Weil settings of two Weil algebras $W_1, W_2$,
  with $\set{\boldsymbol{b}_1^i, \dots, \boldsymbol{b}_{\ell_i}^i}$ a basis,
  $(k^i_1, \dots, k^i_{n_i})$ an upper bounds and $M_i$ a multiplication table for each $W_i$.
  \item[Output] Weil settings of $W_1 \otimes_\R W_2$.
  \item[Procedure] {\upshape \textsc{WeilTensor}}
\end{description}
\begin{alg}
(@$\boldsymbol{b}_1, \dots, \boldsymbol{b}_{\ell_1 \ell_2}$@) <- Convol(@$\vec{\boldsymbol{b}}^1, \vec{\boldsymbol{b}}^2$@)
M <- {}
For ({@$\boldsymbol{b}^1_L, \boldsymbol{b}^1_R$@}, @$(c_1,\dots,c_{\ell_1})$@) in @$M_1$@
  For ({@$\boldsymbol{b}^2_L, \boldsymbol{b}^2_R$@}, @$(d_1,\dots,d_{\ell_1})$@) in @$M_2$@
    M({@$\boldsymbol{b}^1_L \boldsymbol{b}^2_L$@, @$\boldsymbol{b}^1_R \boldsymbol{b}^2_R$@}) <- Convol(@$\vec{c}$@, @$\vec{d}$@)
@$\NonVan_{W_1 \otimes W_2}$@ <- {}
For (@$\X^\alpha$@, @$(c_1, \dots, c_{\ell_1})$@) in @$\NonVan_{W_1}$@
  For (@$\Y^\beta$@, @$(d_1,\dots,d_{\ell_2})$@) in @$\NonVan_{W_2}$@
    @$\NonVan_{W_1 \otimes W_2}(\X^\alpha\Y^\beta)$@ <- Convol(@$\vec{c}$@, @$\vec{d}$@)
Return @$(\boldsymbol{b}, M, (\vec{k}^1, \vec{k}^2), \NonVan_{W_1 \otimes W_2})$@
\end{alg}

Here, {\upshape \textsc{Convol}} is a convolution of two sequences:
\begin{description}
  \item[Procedure] $\mathop{\text{\upshape\scshape Convol}}((c_1, \dots, c_{\ell_1}), (d_1, \dots, d_{\ell_2}))$
\end{description}
\begin{alg}
For i in 1..(@$\ell_1 \times \ell_2$@)
  j <- @$\lfloor \frac{i}{\ell_2} \rfloor$@; k <- @$i \mod{\ell_2}$@
  @$a_i$@ <- @$c_j d_k$@
Return (@$a_1, \dots, a_{\ell_1 \ell_2}$@)
\end{alg}
\end{algorithm}

The validity proof is routine work.

\section{Examples}\label{sec:examples}
We have implemented the algorithms introduced in the previous section on top of two libraries: \texttt{computational-algebra} package~\cite{ISHII:2018ek,computational-algebra} and \texttt{ad} package~\cite{Kmett:2010aa}.
The code is available on GitHub~\cite{Ishii:2020aa}.

\subsection{Higher-order derivatives via dual numbers and higher infinitesimals}
As indicated by \Cref{thm:univ-partial-duals} and \Cref{lem:higher-infinitesimal},
to compute higher-order derivatives of univariate functions, we can use tensor products of Dual numbers or higher-order infinitesimals.

Let us first compute higher-order derivatives of $\sin(x)$ up to $3$.
First, Let us use a tensor product of dual numbers:
\begin{code}
d1, d2, d3 :: Floating a => Weil (D1 |*| D1 |*| D1) a
[d1, d2, d3] = map di [0..]
\end{code}
Here, \hask{Weil w a} represents the type of Weil algebra with its setting given in \hask{w}, \hask{D1} the dual number ideal $I = (X^2)$, and \hask{|*|} the tensor product operator.
Each $d_i$ corresponds to $i$\textsuperscript{th} infinitesimal.

Next, we calculate higher-order differential coefficients at $x = \frac{\pi}{6}$ up to the third order:

\begin{repl}
>>> (sin (pi/6),  cos (pi/6), -sin (pi/6), -cos (pi/6))
( 0.49999999999999994, 0.8660254037844387, -0.49999999999999994,
  -0.8660254037844387)

>>> sin (pi/6 + d0 + d1 + d2)
-0.8660254037844387 d(0) d(1) d(2) - 0.49999999999999994 d(0) d(1) 
  - 0.49999999999999994 d(0) d(2)  - 0.49999999999999994 d(1) d(2) 
  + 0.8660254037844387 d(0) + 0.8660254037844387 d(1) 
  + 0.8660254037844387 d(2) + 0.49999999999999994
\end{repl}

It is easy to see that terms of degree $i$ have the coefficients $\sin^{(i)}(\pi/6)$.
Since our implementation is polymorphic, if we apply the same function to the type for symbolic computation, say \hask{Symbolic}, we can reconstruct symbolic differentiation and check that the result is indeed correct symbolically:

\begin{repl}
>>> x :: Weil w Symbolic
>>> x = injectCoeff (var "x")

>>> normalise <$> sin (x + d0+d1+d2)
((-1.0) * cos x) d(0) d(1) d(2) + (- (sin x)) d(0) d(1) 
  + (- (sin x)) d(0) d(2) + (- (sin x)) d(1) d(2) 
  + (cos x) d(0) + (cos x) d(1) + (cos x) d(2) + (sin x)
\end{repl}

As stated before, the tensor-of-duals approach blows the number of terms exponentially.
Let us see how higher infinitesimal works.

\begin{code}
eps :: Floating a => Weil (DOrder 4) a
eps = di 0
\end{code}

Here, \hask{DOrder n} corresponds to an algebra $\R[X]/(X^n)$.
Note that, according to \Cref{lem:higher-infinitesimal}, to calculate an $n$\textsuperscript{th} derivative  we have to use $\R[X]/(X^{n+1})$.

\begin{repl}
>>> (sin (pi/6), cos (pi/6), -sin (pi/6)/2, -cos (pi/6)/6)
( 0.49999999999999994, 0.8660254037844387, 
 -0.24999999999999997, -0.14433756729740646)
  
>>> sin (pi/6 + eps) 
-0.14433756729740646 d(0)^3 - 0.24999999999999997 d(0)^2
  + 0.8660254037844387 d(0) + 0.49999999999999994

>>> normalise <$> sin (x + eps)
((-1.0) * cos x / 6.0) d(0)^3 + ((- (sin x)) / 2.0) d(0)^2
  + (cos x) d(0) + (sin x)
\end{repl}

Note that by \Cref{lem:higher-infinitesimal}, each coefficient is not directly a differential coefficient, but divided by $k!$, that is $f(x + \varepsilon)
= \sum_{k \leq 3} \frac{f^{(k)}(x)}{k !} \varepsilon^k$.

Let us see how tensor products of higher Weil algebras can be used to multivariate higher-order partial derivatives.
Suppose we want to calculate partial derivatives of $f(x, y) = e^{2x} \sin y$ up to $(2, 1)$\textsuperscript{th} order.
\begin{repl}
eps1, eps2 :: Weil (DOrder 3 |*| DOrder 2) a
(eps1, eps2) = (di 0, di 1)

>>> f (2 + eps1) (pi/6 + eps2)
94.5667698566742 d(0)^2 d(1) + 54.59815003314423 d(0)^2
  + 94.5667698566742 d(0) d(1) + 54.59815003314423 d(0)
  + 47.2833849283371 d(1) + 27.299075016572115

>>> normalise <$> f (x + eps1) (y + eps2)
(4.0 * exp (2.0 * x) / 2.0 * cos y) d(0)^2 d(1)
  + (4.0 * exp (2.0 * x) / 2.0 * sin y) d(0)^2
  + (2.0 * exp (2.0 * x) * cos y) d(0) d(1)
  + (2.0 * exp (2.0 * x) * sin y) d(0)
  + (exp (2.0 * x) * cos y) d(1) + (exp (2.0 * x) * sin y)
\end{repl}
One can see that the coefficient of $d(0)^i d(1)^j$ corresponds exactly to the value $D^{(i,j)}f(x,y)/i!j!$.
In this way, we can freely compose multiple Weil algebra to calculate various partial derivatives modularly.

\subsection{Computation in General Weil Algebra}
All examples so far were about the predefined, specific form of a Weil algebra.
Here, we demonstrate that we can determine whether the given ideal defines Weil algebras with \Cref{alg:weil-test}, and do some actual calculation in arbitrary Weil algebra.

First, we see that \textsc{WeilTest} rejects invalid ideals:

\begin{repl}
-- R[X,Y]/(X^3 - Y), not zero-dimensional
>>> isWeil (toIdeal [x ^ 3 - y :: Q[x,y]])
Nothing

-- R[X]/(X^2 - 1), which is zero-dimensional but not Weil
>>> isWeil (toIdeal [x ^ 2 - 1 :: Q[x]])
Nothing
\end{repl}

Next, we try to calculate in arbitrary chosen Weil algebra, $W = \R[x,y]/(x^2 - y^3, y^4)$, whose corresponding meaning in AD is unclear but is a Weil algebra as a matter of fact.

\begin{repl}
i :: Ideal (Rational[x,y])
i = toIdeal [x ^ 2 - y ^ 3, y ^ 4]

>>> isWeil i
Just WeilSettings 
  {weilBasis =[[0,0],[0,1], ..., [3,0]]
  , nonZeroVarMaxPowers = [3,3]
  , weilMonomDic = 
      [([0,2],[0,0,0,1,0,0,0,0]), ..., ([1,3],[0,0,0,0,0,0,0,1])]
  , table = [((0,0),1),((1,3),d(0)^2), ..., ((3,4),d(0)^3)]
  }
\end{repl}

Let us see what will happen evaluating $\sin(a + d_0 + d_1)$, where $d_0 = [x]_I, d_1 = [y]_I$?

\begin{repl}
>>> withWeil i (sin (pi/4 + di 0 + di 1))
-2.7755575615628914e-17 d(0)^3 - ... + 0.7071067811865476 d(0) 
  + 0.7071067811865476 d(1) + 0.7071067811865475

>>> withWeil i (normalise <$> sin (x + di 0 + di 1))
((-1.0) * (- (sin x)) / 6.0 + (-1.0) * cos x / 6.0) d(0)^3
  + ... + (cos x) d(0) + (cos x) d(1) + (sin x)
\end{repl}

Carefully analysing each output, one can see that the output coincides with what is given by \Cref{thm:series-is-smooth} and \Cref{lem:quot-ring-ideal}.

\section{Discussions and Conclusions}\label{sec:concl}
We have illustrated the connection between automatic differentiation (AD) and $C^\infty$-rings, especially Weil algebras.
Methods of AD can be viewed as techniques to calculate partial coefficients simultaneously by partially implementing the $C^\infty$-lifting operator for certain $C^\infty$-ring.
Especially, Forward-mode AD computes the $C^\infty$-structure of the dual number ring $\R[d] = \R[X]/X^2$ and Tower-mode computes that of the formal power series ring $\Rseries$.

The dual number ring $\R[d]$ is an archetypical example of Weil algebra, which formalises the real line with nilpotent infinitesimals.
We generalised this view to arbitrary Weil algebras beyond dual numbers, enabling us to compute higher-order derivatives efficiently and succinctly.
We gave general algorithms to compute the $C^\infty$-structure of Weil algebras.
With tensor products, one can easily compose (univariate) higher-order AD corresponding to Weil algebras into multivariate ones.

In this last section, we briefly discuss the possible applications other than AD, related works and future works.

\subsection{Possible Applications and Related Works}
Beside the reformulation of AD, we can argue that our methods can be used for a pedagogical purpose in teaching \emph{Smooth Infinitesimal Analysis} (SIA) and \emph{Synthetic Differential Geometry} (SDG).
In those fields, arguing in the appropriate intuitionistic topos, various infinitesimal spaces corresponding to Weil algebra is used to build a theory, expressed by the following \emph{generalised Kock-Lawvere axiom}~\cite{Moerdijk:1991aa}:

\begin{quote}
  For any Weil algebra $W$, the following evaluation map gives an isomorphism:
  \begin{align*}
    \mathop{\mathrm{ev}}: &W \to   \R^{\mathop{\mathrm{Spec}}_\R{W}}\\
       &a \mapsto \lambda f. f(a)
  \end{align*}
\end{quote}
This is another way to state the fact that Weil algebras are $C^\infty$-rings, viewed within some topoi.
For dual numbers, their meaning is clear: it just couples a value and their (first-order) differential coefficient.
However, solely from Kock-Lawvere axiom, it is unclear what the result is in general cases.
With the algorithms we have proposed, students can use computers to calculate the map given by the axiom.
In SIA and SDG, there are plenty of uses of generalised infinitesimal spaces such as $\R[x_1,\dots, x_n]/\braket{x_i x_j | i, j \leq n}$ or $\R[x]/(x^n)$.
Hence, concrete examples for these Weil algebras can help to understand the theory.

In the context of SDG, applying techniques in computer algebra to Weil algebras has attained only little interest.
One such example is Nishimura--Osoekawa~\cite{Nishimura:2007aa}: they apply zero-dimensional ideal algorithms to compute the generating relation of limits of Weil algebras.
However, their purpose is to use computer algebra to ease heavy calculations needed to develop the theory of SDG, and hence they are not interested in computing the $C^\infty$-structure of Weil algebras.

Implementing AD in a functional setting has a long history.
See, for example, Elliott~\cite{Elliott2009-beautiful-differentiation} for explanation and \texttt{ad} package by Kmett~\cite{Kmett:2010aa} for actual implementation.
In \texttt{ad} package, so-called \emph{Skolem trick}, or \emph{RankN trick} is applied to distinguish multiple directional derivatives.
We argue that our method pursues other direction of formulations; we treat higher infinitesimals as first-class citizens, enabling us to treat higher-order AD in a more modular and composable manner.

\subsection{Future Works}
In SDG, $C^\infty$-ring and higher infinitesimals are used as fundamental building blocks to formulate manifolds, vector fields, differential forms, and so on.
Hence, if one can extend our method to treat a general $C^\infty$-ring $C^\infty(M)$ of real-valued smooth functions on $M$, it can open up a new door to formulate differential geometry on a computer.
With such a formulation, we can define differential-geometric objects in more synthetic manner using nilpotent infinitesimals -- for example, one can define the tangent space $T_x M$ at $x \in M$ on some manifold $M$ to be the collection of $f: D \to M$ with $f(0) = x$, where $D$ is the set of nilpotents of order two.
Another virtue of such system is that we can treat infinitesimal spaces (derived from Weil algebras), manifolds, functions spaces, and vector spaces uniformly -- they are all living in the same category.
See Moerdijk--Reyes~\cite{Moerdijk:1991aa} for more theoretical details.
One major obstacle in this direction is that, even if $C^\infty(M)$ is finitely \emph{presented} as a $C^\infty$-ring, it is NOT finitely \emph{generated} as an $\R$-algebra, but $2^{\aleph_0}$-generated, by its very nature.
Hence, it seems impossible to compute $C^\infty(M)$ in purely symbolic and direct way; we need some workarounds or distinct formulations to overcome such obstacles.

As for connections with AD, there is also plenty of room for further exploration.
There are so many ``modes'' other than Forward- and Tower-modes in AD: for examples, Reverse, Mixed, etc.\ amongst others.
From the point of view of Weil algebras, they are just implementation details.
But such details matter much when one takes the efficiency seriously.
It might be desirable to extend our formulation to handle such differences in implementation method.
For such direction, Elliot~\cite{Elliott:2018aa} proposes the categorical formulation.
Exploring how that approach fits with our algebraic framework could be interesting future work, and perhaps also shed a light on the way to realise the aforementioned computational SDG.

\section*{Acknowledgments}
The author is grateful to Prof.\ Akira Terui, for encouraging to write this paper and many helpful feedbacks, and to anonymous reviewers for giving constructive comments.
\appendix
\section{Succinct Multivariate Lazy Tower AD}\label{sec:appendix}
For completeness, we include the referential implementation of the Tower-mode AD in Haskell, which can be used in \Cref{alg:smooth-weil}.
The method presented here is a mixture of Lazy Multivariate Tower~\cite{Pearlmutter:2007aa} and nested Sparse Tower~\cite{Kmett:2010aa}.
For details, we refer readers to the related paper by the author in RIMS K\^{o}ky\^{u}roku~\cite{Ishii:2021ab}.

The idea is simple: we represent each partial derivative as a path in a tree of finite width and infinite heights.
A path goes down if the function is differentiated by the 0\textsuperscript{th} variable.
It goes right if there will be no further differentiation w.r.t. 0\textsuperscript{th} variable, but differentiations w.r.t.\ remaining variable can take place.
This intuition is depicted by the following illustration of the ternary case:

\begin{center}
\begin{tikzpicture}[
  level/.style={level distance=1cm},
  level 1/.style={sibling distance=2.8cm},
  level 2/.style={sibling distance=1cm},
  level 3/.style={sibling distance=5mm}
  ]
  \newcommand{\ba}{\boldsymbol{a}}
  \node (fa) {$f(\ba)$}
    child{ 
      node (fxa) {$f_x(\ba)$}
        child { 
          node (fxxa) {$f_{x^2}(\ba)$}
          child {node{$\vdots$}}
          child {node{$\vdots$}}
          child {node{$\vdots$}}
        }
        child {
          node (fxya) {$f_{xy}(\ba)$}
          child {node{$\vdots$}} child {node{$\vdots$}}
        }
        child { 
          node (fxza) {$f_{xz}(\ba)$}
          child { node{$\vdots$} }
        }
    }
    child {
      node (fya) {$f_y(\ba)$}
      child { node {$f_{y^2}(\ba)$} child {node{$\vdots$}} child{ node{$\vdots$}} }
      child { node {$f_{yz}(\ba)$} child {node {$\vdots$}} }
    }
    child {
      node (fza) {$f_z(\ba)$}
      child {
          node {$f_{z^2}(ba)$}
          child { node {$\vdots$} }
      }
    };
\end{tikzpicture}
\end{center}

This can be seen as a special kind of infinite trie (or prefix-tree) of alphabets $\partial_{x_i}$, with available letter eventually decreasing.

This can be implemented by a (co-)inductive type as follows:

\begin{code}
data STower n a where
  ZS :: !a -> STower 0 a
  SS :: !a -> STower (n + 1) a -> STower n a
     -> STower (n + 1) a
\end{code}

A tree can have an \emph{infinite height}.
Since Haskell is a lazy language, this won't eat up the memory and only necessary information will be gradually allocated.
Since making everything lazy can introduce a huge space leak, we force each coefficient \hask{a} when their corresponding data constructors are reduced to weak head normal form, as expressed by field strictness annotation \hask{!a}.

Then a lifting operation for univariate function is given by:

\begin{code}
liftSTower :: forall c n a.
  (KnownNat n, c a, forall x k. c x => c (STower k x) ) =>
  (forall x. c x => x -> x) ->
    -- ^ Function
  (forall x. c x => x -> x) ->
    -- ^ its first-order derivative
  STower n a ->
  STower n a
liftSTower f df (ZS a) = ZS (f a)
liftSTower f df x@(SS a da dus) 
  = SS (f a) (da * df x) (liftSTower @c f df dus)
\end{code}

Here, we use type-level constraint \hask{c} to represent to a subclass of smooth functions, e.g.\ $\hask{c} = \hask{Floating}$ for elementary functions.
Constraint of form $\forall x k.\ \texttt{c}\ x => \texttt{c}\ (\texttt{STower}\ k\ x)$ is an example of so-called \emph{Quantified Constraints}.
This requires \hask{c} to be implemented for any succinct Tower AD, provided that their coefficient type, say \hask{x}, is also an instance of \hask{c}.
This constraint is used recursively when one implements an actual implementation of instance \hask{c (STower n a)}.
For example, \hask{Floating} instance (for elementary floating point operations) can be written as follows:

\begin{code}
instance Floating a => Floating (STower n a) where
  sin = liftSTower @Floating sin cos
  cos = liftSTower @Floating cos (negate . sin)
  exp = liftSTower @Floating exp exp
  ...
\end{code}

In this way, we can implement Tower AD for a class of smooth function closed under differentiation, just by specifying an original and their first derivatives.

More general $n$-ary case of lifting operator is obtained in just the same way:

\begin{code}
liftNAry :: forall c n a m. 
    ( c a, forall x k. (KnownNat k, c x) => c (STower k x) ) =>
  -- | f, an m-ary smooth function
  (forall x. c x => Vec m x -> x) ->
  -- | partial derivatives of f,
  -- wrt. i-th variable in the i-th.
  Vec m (SmoothFun c m) ->
  Vec m (STower n a) ->
  STower n a
liftNAry f _ Nil = constSS $ f Nil
liftNAry f dfs xss =
  case sing @l of
    Zero -> ZS (f $ constTerm <$> xss)
    Succ (k :: SNat k) ->
      SS (f $ constTerm <$> xss)
         ( sum
         $ SV.zipWithSame 
            (\fi gi -> topDiffed gi * runSmooth fi xss)
           dfs xss
         )
         (liftNAry @c f dfs $ diffOther <$> xss)

diffOther :: STower (n + 1) a -> STower n a
diffOther (SS _ _ dus) = dus
\end{code}

\printbibliography

\end{document}